\title[Asymptotic construction of LRC]{Construction of asymptotically good locally repairable codes via automorphism groups of function fields}
\author{Xudong Li}\address{Lab of Security Insurance Cyberspace  and School of Science, Xihua University, Chengdu, China
610039}\email{lixudong73@163.com}
\author{Liming Ma}\address{School of Mathematical Sciences, Yangzhou University, Yangzhou, China
225002}\email{lmma@yzu.edu.cn}
\author{Chaoping Xing} \address{Division of Mathematical Sciences, School of Physical Mathematical Sciences,
Nanyang Technological University, Singapore
637371}\email{xingcp@ntu.edu.sg}
\date{}
\newtheorem{lemma}{Lemma}[section]
\newtheorem{theorem}[lemma]{Theorem}
\newtheorem{cor}[lemma]{Corollary}
\newtheorem{prop}[lemma]{Proposition}
\newtheorem{defn}{Definition}
\theoremstyle{remark}
\newtheorem{rmk}{Remark}
\renewcommand{\epsilon}{\varepsilon}
\renewcommand{\le}{\leqslant}
\renewcommand{\ge}{\geqslant}
\def\Gal{{\rm Gal}}
\newcommand{\vnote}[1]{}
\def\F{\mathbb{F}}
\def \mA {\mathcal{A}}
\def \mA {\mathcal{A}}
\def \mL {\mathcal{L}}
\def \mP {\mathcal{P}}
\def \Xi {{X^{[i]}}}
\newcommand{\Ga}{\alpha}
\newcommand{\Gb}{\beta}
\newcommand{\Gs}{\sigma}
\def \bc {{\bf c}}
\def\supp {{\rm supp }}
\def\mG{{\mathcal G}}
\def\Aut {{\rm Aut }}
\def\LRC {{\rm locally repairable code\ }}
\def\LRCs {{\rm locally repairable codes\ }}
\def\Gal{{\rm Gal}}
\begin{document}

\maketitle

\begin{abstract} 
Locally repairable codes have been investigated extensively in recent years due to practical application in distributed storage as well as theoretical interest. However, not much work on asymptotical behavior of locally repairable codes has been done until now. In particular, there is a little result on constructive  lower bound on asymptotical behavior of locally repairable codes. In this paper, we extend  the construction given in  \cite{BTV17} via automorphism groups of function field towers. The main advantage of our construction is to allow more flexibility of locality. Furthermore, we show that the Gilbert-Varshamov type bound  on locally repairable codes can be improved for all sufficiently large $q$.
\end{abstract}

\section{Introduction}
Due to applications in distributed storage systems, locally repairable codes (or locally recoverable codes) have received a lot of attention recently. Most of the papers on this topic focus on  constructions of   locally repairable codes of finite length or various bounds \cite{CM15, GHSY12, PD14, PKLK12, SRKV13, TB14, TBF16, TPD16}. Unlike in the classical coding case, few papers study the asymptotical behavior of locally repairable codes. The main purpose of this paper is to present a construction of asymptotically good locally repairable codes via automorphism groups of function field towers.

\subsection{Locally repairable codes and some bounds}\label{subsec:2.1}
Informally speaking, a block code is said with locality $r$ if  every coordinate of a given codeword can be recovered by accessing at most $r$ other coordinates of this codeword.  Let $q$ be a prime power and let $\F_q$ be the finite field with $q$ elements. The formal definition of a locally repairable code with locality $r$ is given as follows.

\begin{defn}
Let $C\subseteq \F_q^n$ be a $q$-ary block code of length $n$. For each $\Ga\in\F_q$ and $i\in \{1,2,\cdots, n\}$, define $C(i,\Ga):=\{\bc=(c_1,\dots,c_n)\in C\; : \; c_i=\Ga\}$. For a subset $I\subseteq \{1,2,\cdots, n\}\setminus \{i\}$, we denote by $C_{I}(i,\Ga)$ the projection of $C(i,\Ga)$ on $I$.
Then $C$ is called a locally repairable code with locality $r$ if, for every $i\in \{1,2,\cdots, n\}$, there exists a subset
$I_i\subseteq \{1,2,\cdots, n\}\setminus \{i\}$ with $|I_i|\le r$ such that  $C_{I_i}(i,\Ga)$ and $C_{I_i}(i,\Gb)$ are disjoint for any $\Ga\neq \Gb\in \F_q$.
\end{defn}

Apart from the usual parameters: length, rate and minimum distance,  the locality of a   locally repairable code plays a crucial role. In this paper, we always consider locally repairable codes that are linear over $\F_q$. Thus, a $q$-ary \LRC of length $n$, dimension $k$, minimum distance $d$ and locality $r$ is said to be an $[n,k,d]_q$-\LRC with locality $r$.

The well-known Singleton-type bound \cite{GHSY12,PD14} for locally repairable codes with locality $r$  was given by
\begin{equation}\label{Singletonbound}
d\le n-k-\left\lceil \frac{k}{r} \right\rceil+2.
\end{equation}
If we ignore the minimum distance of a $q$-ary locally repairable code, then there is a constraint on the rate \cite{GHSY12}, namely,
\begin{equation}\label{eq:x2}
\frac{k}{n} \le \frac{r}{r+1}.
\end{equation}
In this paper, the minimum distance of a \LRC is taken into consideration. For an $[n,k,d]$-linear code, $k$ information symbols can recover the whole codeword. Thus, the locality $r$ is usually upper bounded by $k$. If we allow $r=k$, i.e., there is no constraint on locality, then the bound \eqref{Singletonbound} becomes the usual Singleton bound that shows constraint on $n,k$ and $d$ only. The other extreme case is that the locality $r$ is $1$. In this case, the \LRC is a repetition code by repeating each symbol twice and the bound \eqref{Singletonbound} becomes $d(C)\le n-2k+2$ which shows  the Singleton bound for repetition codes.

When studying the asymptotical behavior of \LRCs, we only consider the fixed locality $r$ (i.e., $r$ does not change when the length $n$ tends to $\infty$), while the dimension and minimum distance are propositional to the length $n$.
Let $R_q(r,\delta)$ denote the asymptotic bound on the rate of $q$-ary locally repairable codes with locality $r$ and relative minimum distance $\delta$, i.e., $$R_q(r,\delta)=\limsup_{n\rightarrow \infty} \frac{\log_qM_q(n,\lfloor\delta n\rfloor,r)}{n} ,$$
where $M_q(n,d,r)$ is the maximum size of  \LRCs of length $n$, minimum distance $d$ and locality $r$.
Then the Singleton-type bound \eqref{Singletonbound} gives
\begin{equation}\label{S_bound}
R_q(r,\delta)\le  \frac{r}{r+1}(1-\delta) \mbox{ for } 0\le \delta\le 1.
\end{equation}
The following two upper bounds can be  found in \cite{CM15,TBF16}:
\begin{equation}\label{P_bound}
R_q(r,\delta)\le  \frac{r}{r+1}\left(1-\frac{q}{q-1}\cdot \delta\right) \text{ for } 0\le \delta\le 1-q^{-1}
\end{equation}
and
\begin{equation}\label{LP_bound}
R_q(r,\delta)\le \min_{0\le \tau \le \frac{1}{r+1}} \left\{\tau r+(1-\tau (r+1))f_q\left(\frac{\delta}{1-\tau(r+1)}\right)\right\},
\end{equation}
where $f_q(x):=H_q\left(\frac{1}{q}( q-1-x(q-2)-2\sqrt{(q-1)x(1-x)})\right)$ and $H_q(x)$ is the $q$-ary entropy function defined by
$$H_q(x):=x\log_q(q-1)-x\log_q(x)-(1-x)\log_q(1-x).$$
The bound given in \eqref{P_bound} is derived from the Plotkin bound, while the bound given in \eqref{LP_bound} is obtained from the linear programming bound for $q$-ary codes \cite{Aa77}.

For $0\le \delta \le 1-q^{-1}$, the asymptotic Gilbert-Varshamov bound of codes without  locality constraint is given by
$R\ge 1-H_q(\delta)$, and
the asymptotic Gilbert-Varshamov bound of codes with locality constraint is given in  \cite{TBF16}
\begin{equation}\label{asymp_GV_bound}
R_q(r,\delta)\ge 1-\min_{0<s\le 1} \left\{\frac{1}{r+1} \log_q((1+(q-1)s)^{r+1}+(q-1)(1-s)^{r+1})-\delta \log_qs\right\}.
\end{equation}

\subsection{Known results}
Although there are several asymptotically upper bounds and  the asymptotic Gilbert-Varshamov bound on locally repairable codes,   there is little work on asymptotical lower bounds  that are constructive.
By using optimal Garcia-Stichtenoth function field tower,
Barg {\it et al.} \cite{BTV17} gave a construction of  asymptotically good $q$-ary locally repairable codes with locality $r$  whose rate $R$ and relative distance $\delta$ satisfy
\begin{equation}\label{construction_l}
R \ge\frac{r}{r+1}\Big{(}1-\delta-\frac{3}{\sqrt{q}+1}\Big{)},\quad r=\sqrt{q}-1,
\end{equation}
\begin{equation}\label{construction_l+1}
R \ge \frac{r}{r+1}\Big{(}1-\delta-\frac{\sqrt{q}+r}{q-1}\Big{)},\quad (r+1)|(\sqrt{q}+1).
\end{equation}
It was further shown in \cite{BTV17} that for some values $r$ and $q$, the bound \eqref{construction_l+1} exceeds the asymptotic Gilbert-Varshamov bound \eqref{asymp_GV_bound}.

\subsection{Our results} In this paper, we also employ the optimal Garcia-Stichtenoth function field tower \cite{GS96} to construct  asymptotically good locally repairable codes. Our method can be viewed as an extension of the construction in  \cite{BTV17} in the sense that we make use of automorphism groups of the Garcia-Stichtenoth function field tower \cite{La13}. Furthermore, our construction allows more flexibility of locality. More precisely speaking, we have the following main result in this paper.
\begin{theorem}\label{thm:1.1}
Let $q=\ell^2$, with $\ell=p^w$ for a prime $p$ and an integer $w\ge 1$. For any integer $v$ with $0\le v\le w$ and positive integer $u$ satisfying $u|\gcd(p^v-1, \ell -1)$, let $r+1=up^v$. Then there exists a family of explicit $q$-ary linear locally repairable codes with locality $r$ whose rate $R$ and relative distance $\delta$ satisfy
\begin{equation}\label{eq:main}R\ge \frac{r}{1+r}\Big{(}1-\delta-\frac{\sqrt{q}+r-1}{q-\sqrt{q}}\Big{)}.\end{equation}
\end{theorem}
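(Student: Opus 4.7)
The plan is to extend the Barg--Tamo--Vladut construction \cite{BTV17} by replacing their cyclic automorphism of order dividing $\sqrt{q}\pm 1$ with a richer non-cyclic $\F_q$-rational subgroup $H\le\Aut(F_n/\F_q)$ of the Garcia--Stichtenoth tower $(F_n)_{n\ge 0}$ over $\F_q=\F_{\ell^2}$. Using the results of \cite{La13}, I would locate a subgroup $H$ of order $r+1=up^v$ structured as a semidirect product $\F_{p^v}^+\rtimes\mu_u$, in which $\F_{p^v}^+\subseteq\F_\ell^+$ supplies $p^v$ wildly ramified additive automorphisms and $\mu_u\subseteq\F_\ell^*$ is a cyclic group of $u$-th roots of unity acting by scaling. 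The divisibility hypotheses are exactly what make this semidirect product well-defined and $\F_q$-rational: $u\mid p^v-1$ lets $\mu_u$ act on $\F_{p^v}^+$, and $u\mid\ell-1$ embeds $\mu_u$ in $\F_q$.

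Next I would set $E_n:=F_n^H$, so that $\pi:F_n\to E_n$ is Galois of degree $r+1$, and invoke Riemann--Hurwitz in the form $2g(F_n)-2=(r+1)(2g(E_n)-2)+\deg\mathrm{Diff}(F_n/E_n)$. Define $S\subseteq F_n$ to be the $\F_q$-rational places with trivial $H$-stabiliser; for large $n$ this excludes only the finitely many ramified places, so $|S|=N(F_n)-O(1)$ and $\pi(S)\subseteq E_n$ consists of $|S|/(r+1)$ distinct $\F_q$-rational places. Choose a primitive generator $x\in F_n$ of $F_n/E_n$ such that $\{\sigma(x):\sigma\in H\}$ takes $r+1$ pairwise distinct values at every $P\in S$, fix an effective $\F_q$-rational divisor $D$ on $E_n$ supported away from $\pi(S)$, and set
\begin{equation*}
V \;=\; \bigoplus_{j=0}^{r-1} x^j\,\pi^*\mL(D) \;\subseteq\; \mL\!\left(\pi^*D+(r-1)(x)_\infty\right),\qquad C \;=\; \mathrm{ev}_S(V).
\end{equation*}
For each fibre $\pi^{-1}(Q)=\{P_0,\dots,P_r\}\subseteq S$ the restriction to that fibre of $f=\sum_{j<r}(a_j\circ\pi)x^j\in V$ evaluates a polynomial of degree $<r$ at the $r+1$ distinct values $x(P_0),\dots,x(P_r)\in\F_q$, i.e.\ it is an $[r+1,r]_q$ Reed--Solomon codeword, and hence $C$ has locality $r$.

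For the parameters, Riemann--Roch on $E_n$ gives $\dim V=r\dim\mL(D)\ge r(\deg D+1-g(E_n))$, while any nonzero $f\in V$ has a pole divisor of degree at most $(r+1)\deg D+(r-1)\deg(x)_\infty$, hence at most that many zeros among $S$, giving $d\ge n-(r+1)\deg D-(r-1)\deg(x)_\infty$ with $n:=|S|$. Eliminating $\deg D$ by saturating $d\ge\delta n$ yields
\begin{equation*}
R \;\ge\; \frac{r}{r+1}\left(1-\delta-\frac{(r+1)(g(E_n)-1)+(r-1)\deg(x)_\infty}{n}\right).
\end{equation*}
Substituting the Riemann--Hurwitz comparison of $g(E_n)$ with $g(F_n)$ and passing to the limit along the tower via the Drinfeld--Vladut identity $N(F_n)/g(F_n)\to\sqrt{q}-1$, together with a choice of $x$ whose pole divisor is controlled by the degree of the cover, collapses the correction term into the quantity $(\sqrt{q}+r-1)/(q-\sqrt{q})$ appearing in \eqref{eq:main}.

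The main obstacle I anticipate is concentrated in the first two steps: certifying that $H=\F_{p^v}^+\rtimes\mu_u$ embeds into $\Aut(F_n/\F_q)$ at \emph{every} level $n$ of the tower, and sharply estimating $\deg\mathrm{Diff}(F_n/E_n)$ so that $g(E_n)$ is comparable to $g(F_n)/(r+1)$ with explicit boundary terms. The wild $p^v$-part of $H$ forces a careful analysis of the higher ramification filtration, since the naive tame formula does not apply. Once \cite{La13} supplies the automorphism subgroup and this ramification bookkeeping is carried out, the remaining Riemann--Roch, place-counting, and asymptotic-limit steps are routine.
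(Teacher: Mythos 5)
Your proposal follows essentially the same route as the paper: a subgroup $\mG$ of order $r+1=up^v$ inside the decomposition group of $P_{\infty,m}$ (from Lagemann's description, realized as a group of maps $y_m\mapsto cy_m+a$ with $c$ in an order-$u$ subgroup of $\F_\ell^*$ and $a$ in a suitable $p^v$-element subspace of $\{a:a^\ell+a=0\}$ — your semidirect product $\F_{p^v}^+\rtimes\mu_u$), the fixed field $E=T_m^{\mG}$, the function space $\bigoplus_{j<r}x^j\,\mL(sQ_{\infty})$ with $x=y_m$ evaluated at the completely splitting places, Reed--Solomon repair on fibres, and the Hurwitz plus Drinfeld--Vl\u{a}du\c{t} limit. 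One substantive correction: the ``main obstacle'' you flag — sharply estimating $\deg\mathrm{Diff}(T_m/E)$ through the wild higher ramification filtration — is not actually needed, because Riemann--Roch only requires the one-sided bound $g(E)-1\le (g(T_m)-1)/(r+1)$, which follows from the mere effectiveness of the different; the only points genuinely requiring verification are that $x=y_m$ separates each fibre (a short computation using $\Ga_m^\ell+\Ga_m\ne 0$) and that $1,x,\dots,x^{r-1}$ are $E$-linearly independent, which follows from $v_{P_{\infty,m}}(y_m)=-1$ together with the total ramification of $P_{\infty,m}$ in $T_m/E$.
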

It seems that there are still some constraints on locality $r$ in Theorem \ref{thm:1.1}. However, it allows more flexibility of locality compared with bounds \eqref{construction_l} and \eqref{construction_l+1}. For instance,  the following corollary shows flexible locality.
\begin{cor} If $q$ is a square and it is a power of a prime $p$, then there exists a family of explicit $q$-ary linear locally repairable codes with locality $r$ whose rate $R$ and relative distance $\delta$ achieve the bound {\rm \eqref{eq:main}} provided that $r$ satisfies one of the following conditions
\begin{itemize}
\item[{\rm (i)}] $r+1$ divides $\sqrt{q}$;
\item[{\rm (ii)}] $r+1$ divides $\sqrt{q}-1$;
\item[{\rm (iii)}] $r=up^v-1$ for any divisor $u$ of $p-1$ and $1\le v\le w$;
\item[{\rm (iv)}] $r=u\sqrt{q}-1$ for any divisor $u$ of $\sqrt{q}-1$. 
\end{itemize}
\end{cor}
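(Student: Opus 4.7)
The plan is to derive the corollary as an immediate consequence of Theorem~\ref{thm:1.1}: for each of the four conditions (i)--(iv), I would exhibit a pair $(u,v)$ with $0\le v\le w$, $u\mid \gcd(p^v-1,\ell-1)$, and $r+1=up^v$, and then invoke the theorem to inherit the bound \eqref{eq:main}. No new coding-theoretic ingredient is required; the whole argument is a divisibility-unpacking of the hypothesis of the theorem.

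The bookkeeping would proceed as follows. For case (i), any divisor of $\sqrt q = p^w$ is of the form $p^v$ with $0\le v\le w$, so I take $u=1$, which trivially divides every gcd. For case (ii), I would choose $v=0$; since $p^0-1=0$, one has $\gcd(p^0-1,\ell-1)=\ell-1$, and the remaining condition $u=r+1\mid\ell-1=\sqrt q-1$ is the standing assumption. For case (iii), I use the $u$ and $v$ supplied; the only thing to verify is $u\mid\gcd(p^v-1,\ell-1)$, which follows from $u\mid p-1$ together with the two well-known divisibilities $(p-1)\mid(p^v-1)$ (valid since $v\ge 1$) and $(p-1)\mid(p^w-1)=\ell-1$. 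For case (iv) I would take $v=w$; then $\gcd(p^w-1,\ell-1)=\ell-1$, and the condition $u\mid \ell-1$ is precisely the assumption $u\mid\sqrt q-1$.

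I do not anticipate any genuine obstacle: every case is a one-line verification of the divisibility hypothesis of Theorem~\ref{thm:1.1}. The only subtle point worth flagging in the write-up is the boundary convention $\gcd(0,\ell-1)=\ell-1$, which is what makes cases (ii) and (iv) mesh with the statement of the theorem when $v=0$ and when $v=w$ respectively. Once the four divisibility checks are recorded, the rate–distance inequality \eqref{eq:main} and the existence of an explicit family of codes transfer verbatim from Theorem~\ref{thm:1.1}, completing the proof.
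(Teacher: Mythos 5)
Your proposal is correct and follows exactly the paper's own argument: take $u=1$ for (i), $v=0$ for (ii), the given $(u,v)$ for (iii) using $(p-1)\mid(p^v-1)$ and $(p-1)\mid(\ell-1)$, and $v=w$ for (iv). Your explicit remark on the convention $\gcd(p^0-1,\ell-1)=\ell-1$ is a small point of extra care that the paper leaves implicit, but the route is the same.
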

\begin{proof} Taking $u=1$  in Theorem \ref{thm:1.1} gives the result of (i). Taking $v=0$  in Theorem \ref{thm:1.1} gives the result of (ii). The result of (iii) follows from Theorem \ref{thm:1.1}, since we have $u|\gcd(p^v-1, \ell-1)$ for any $1\le v\le w$. Part (iv) follows from Theorem \ref{thm:1.1} by setting $v=w$.
\end{proof}

\subsection{Comparison} 
Let us first compare our result with the one given in \cite{BTV17}. As our result is not comparable with the bound \eqref{construction_l+1} due to different locality regime, we can only compare with \eqref{construction_l}. It is easy to see that our bound in Theorem \ref{thm:1.1} gives a better result than \eqref{construction_l} when the locality $r$ is $\sqrt{q}-1$. On the other hand, as we mentioned in the previous subsection, the main advantage of this paper is to allow flexible locality.

Next, we compare our result with the asymptotic Gilbert-Varshamov bound \eqref{asymp_GV_bound}. First of all, we give some numerical comparison.  With the help of Sage, we can show that the bound given in  \eqref{eq:main} is better than the  asymptotic Gilbert-Varshamov bound given in \eqref{asymp_GV_bound} if $q,\delta$ and $r$ take the following values:
\begin{itemize}
\item[(1)] $q=2^8$, $\delta=0.5$ and $r\in \{1,2 \};$
\item[(2)] $q=2^{10}$, $\delta=0.5$ and $r\in \{1,3,7,15,30,31 \};$
\item[(3)] $q=2^{12}$, $\delta=0.5$ and $r\in \{1,2,3,6,7,8,11,15,20,31,47,55,62,63 \};$
\item[(4)] $q=3^6$, $\delta=0.5$ and $r\in \{1,2,5,8,12,17,25,26 \};$
\item[(5)] $q=3^8$, $\delta=0.5$ and $r\in \{1,2,3,4,5,7,8,9,15,17,19,26,35,39,53,71,79,80 \};$
\item[(6)] $q=5^4$, $\delta=0.5$ and $r\in \{1,2,3,4,5,7,9,11,19, 23,24 \};$
\item[(7)] $q=5^6$, $\delta=0.5$ and $r\in \{1,3,4,9,19,24,30,49,61 \};$
\item[(8)]  $q=5^8$, $\delta=0.5$ and $r\in \{1,2,3,4,5,7,9,11,12,15,19,23,24,25,38,47,49,51\}$.
\end{itemize}
The following figures \ref{fig:1} and  \ref{fig:2} show the bound given in Theorem \ref{thm:1.1} and the asymptotic Gilbert-Varshamov bound of locally repairable codes given for $r=2,q=3^6$ and $r=6,q=2^{12}$, respectively.

\begin{figure}
\centering
\includegraphics[width=3.5in]{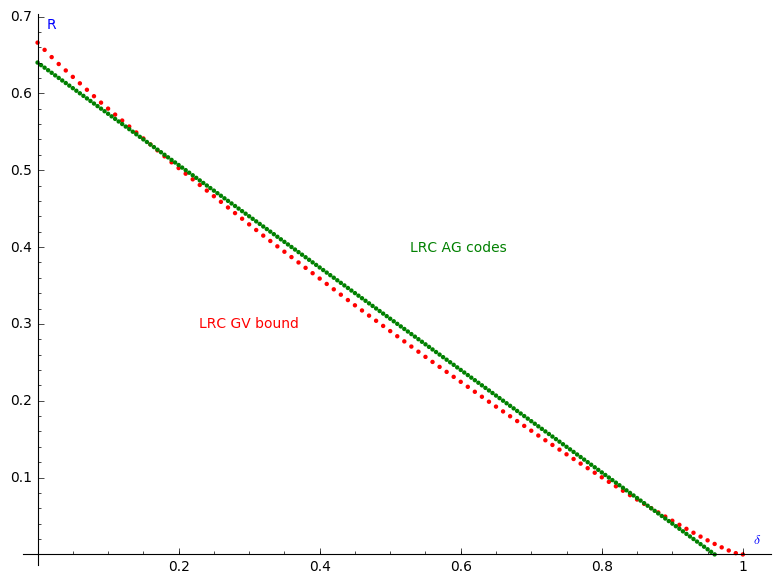}
\caption{$r=2, q=3^6$}\label{fig:1}
\end{figure}

\begin{figure}
\centering
\includegraphics[width=3.5in]{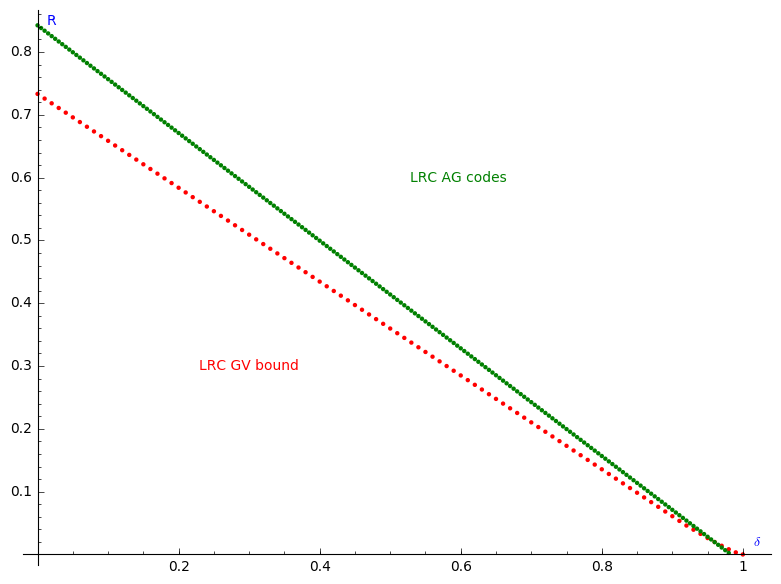}
\caption{$r=6, q=2^{12}$}\label{fig:2}
\end{figure}

Actually we can show that when $q$ is large enough, our bound \eqref{eq:main} given in Theorem \ref{thm:1.1} always exceeds the asymptotic Gilbert-Varshamov bound  of locally repairable codes  for some range of locality $r$.

\begin{prop}\label{prop:1.3} 
If the locality $r$ lies in the range $[c\log_2q, \frac{q}{2\log_2q}]$ for any constant $c>1$, then the bound \eqref{eq:main} given in Theorem \ref{thm:1.1} exceeds the asymptotic Gilbert-Varshamov bound \eqref{asymp_GV_bound} of locally repairable codes for all sufficiently large $q$.
\end{prop}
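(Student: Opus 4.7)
The plan is to compare both bounds asymptotically as $q\to\infty$, with $\delta$ held fixed and $r$ varying in the prescribed window. Both bounds approach $1-\delta$ from below, so the task reduces to showing that the correction term in \eqref{eq:main} is strictly smaller than the correction term built into \eqref{asymp_GV_bound}.

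\textbf{Step 1: turning \eqref{asymp_GV_bound} into an upper bound on the GV quantity.} Write the function to be minimized in \eqref{asymp_GV_bound} as $g(s)=\frac{1}{r+1}\log_q A(s)-\delta\log_q s$ with $A(s)=(1+(q-1)s)^{r+1}+(q-1)(1-s)^{r+1}$. Dropping the second summand, so that $A(s)\ge(1+(q-1)s)^{r+1}$, gives
\begin{equation*}
g(s)\ge\log_q(1+(q-1)s)-\delta\log_q s,
\end{equation*}
and the right-hand side is minimized at $s=\delta/[(q-1)(1-\delta)]$ with value $H_q(\delta)$. Hence the right-hand side of \eqref{asymp_GV_bound} is bounded above by $1-H_q(\delta)$, regardless of $r$; this is the classical, no-locality GV bound.

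\textbf{Step 2: leading-order expansion.} Using $\log_q(q-1)=1-\frac{1}{q\ln q}+O(q^{-2})$ I would expand
\begin{equation*}
1-H_q(\delta)=(1-\delta)-\frac{h_2(\delta)}{\log_2 q}+O\!\left(\tfrac{1}{q\log q}\right),
\end{equation*}
where $h_2(\delta)=-\delta\log_2\delta-(1-\delta)\log_2(1-\delta)$. On the other hand, \eqref{eq:main} rewrites as
\begin{equation*}
R_{\mathrm{our}}=(1-\delta)-\frac{1-\delta}{r+1}-\frac{r}{r+1}\cdot\frac{\sqrt q+r-1}{q-\sqrt q}.
\end{equation*}
For $r\in[c\log_2 q,\,q/(2\log_2 q)]$ the two subtracted terms are at most $\frac{1-\delta}{c\log_2 q}(1+o(1))$ and $\frac{1}{2\log_2 q}(1+o(1))$, respectively. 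Subtracting,
\begin{equation*}
R_{\mathrm{our}}-\bigl(1-H_q(\delta)\bigr)\ge \frac{1}{\log_2 q}\Bigl(h_2(\delta)-\tfrac{1-\delta}{c}-\tfrac12\Bigr)(1+o(1)).
\end{equation*}
For $c>1$ the bracketed quantity is strictly positive in the relevant range of $\delta$ (for instance, at $\delta=\tfrac12$ it equals $\tfrac{c-1}{2c}>0$), so for all sufficiently large $q$ the bound \eqref{eq:main} strictly exceeds \eqref{asymp_GV_bound}.

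\textbf{Main obstacle.} The delicate point is that \eqref{asymp_GV_bound} is an \emph{existence} lower bound on the achievable rate; to establish that our construction beats it, one needs an upper bound on the GV quantity, which requires a \emph{pointwise} lower bound on $g(s)$ rather than an evaluation at some clever $s$. The estimate $A(s)\ge(1+(q-1)s)^{r+1}$ discards only the smaller summand and turns out to be sharp enough in this window because the endpoints $c\log_2 q$ and $q/(2\log_2 q)$ are calibrated precisely so that both corrections in \eqref{eq:main} are simultaneously of order $1/\log_2 q$; any looser upper bound (such as the Singleton-type $\frac{r}{r+1}(1-\delta)$) would differ from \eqref{eq:main} by terms of order $1/\sqrt q$, which are too small compared with the $h_2(\delta)/\log_2 q$ gap we need to exploit.
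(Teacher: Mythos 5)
Your proof is correct, but it follows a genuinely different route from the paper's. The paper works directly with the function $h(s)$ inside \eqref{asymp_GV_bound}: it analyzes $h'(s)$, locates the true minimizer $s_0$ just to the right of $\tfrac{1}{q-1}$, transfers the value via the mean value theorem to $h\bigl(\tfrac{1}{q-1}\bigr)$, and then estimates that quantity from below. You instead discard the summand $(q-1)(1-s)^{r+1}$ to get the pointwise bound $g(s)\ge \log_q(1+(q-1)s)-\delta\log_q s$, whose exact minimum over $s$ is $H_q(\delta)$; this correctly converts the existence bound into the upper bound $1-H_q(\delta)$ on the GV quantity (your ``main obstacle'' remark identifies precisely why a pointwise bound, not a single evaluation, is required). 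The comparison then reduces to beating the \emph{classical} GV bound, and your expansion showing that the two correction terms in \eqref{eq:main} are at most $\bigl(\tfrac{1-\delta}{c}+\tfrac12\bigr)\frac{1+o(1)}{\log_2 q}$ while the entropy defect is $\tfrac{h_2(\delta)}{\log_2 q}$ is sound; at $\delta=\tfrac12$ the margin $\tfrac{c-1}{2c}$ is positive exactly when $c>1$. Your approach is cleaner (no monotonicity claim for $h'$, no localization of $s_0$), and in the window $r\ge c\log_2 q$ it loses essentially nothing, since the discarded term contributes only $\tfrac{1}{r+1}\log_q\bigl(1+(q-1)2^{-(r+1)}\bigr)=O\bigl(q^{1-c}/(r\log q)\bigr)$. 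Two caveats: the proposition (and the paper's proof) implicitly fixes $\delta=\tfrac12$, so your phrase ``in the relevant range of $\delta$'' should be sharpened to ``for $\delta$ with $h_2(\delta)>\tfrac{1-\delta}{c}+\tfrac12$,'' which fails near $\delta=0$ or $\delta=1$; and one should note $s^\ast=\delta/[(q-1)(1-\delta)]\le 1$ so the minimizer of your lower bound indeed lies in $(0,1]$.
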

See Appendix for the proof of Proposition \ref{prop:1.3}.
\begin{rmk}  Indeed, our numerical result shows that if localities $r$ are too small, 
the bound \eqref{eq:main} given in Theorem \ref{thm:1.1} may not exceed the asymptotic Gilbert-Varshamov bound \eqref{asymp_GV_bound}. 
For example, the bound \eqref{eq:main} is worse than the asymptotic Gilbert-Varshamov bound \eqref{asymp_GV_bound} for $r=3, q=64$ or $r=2, q=81$.  On the other hand, it is easy to verify that if localities $r$ are too large, the asymptotic Gilbert-Varshamov bound \eqref{asymp_GV_bound} always outperforms the bound \eqref{eq:main} given in Theorem \ref{thm:1.1}.
\end{rmk}

Finally in this section, we consider the naive construction given in \cite{CM15} and compare the bounds from this naive construction with various bounds discussed above.

\begin{lemma}\label{lem:1.4} (see \cite{CM15}) As long as there is a $q$-ary $[n,k,d]$-classical linear code, there exists a $q$-ary $[n,k-\frac{n}{r+1},d]$-locally repairable code with locality $r$ for any positive integer $r$ with $(r+1)|n$ and $r\ge \frac nk$.
\end{lemma}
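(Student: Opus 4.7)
The plan is to intersect the given $[n,k,d]$ linear code with a suitably chosen local-parity code of locality $r$, and then rely on dimension counting. Concretely, using the divisibility hypothesis $(r+1)\mid n$, I would partition the coordinate set $\{1,\dots,n\}$ into $m:=n/(r+1)$ blocks $B_1,\dots,B_m$ of size $r+1$. Define the local-parity code $C_{\mathrm{loc}}\subseteq \F_q^n$ to consist of all vectors $(c_1,\dots,c_n)$ for which $\sum_{i\in B_j}c_i=0$ for every $j$. These $m$ conditions are linearly independent parity checks, so $\dim C_{\mathrm{loc}}=n-n/(r+1)$, and by construction every coordinate in a block equals the negated sum of the other $r$ coordinates in the same block; hence $C_{\mathrm{loc}}$ itself has locality $r$.

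Next, form the intersection $C':=C\cap C_{\mathrm{loc}}$. By the standard dimension formula for the sum and intersection of two linear subspaces of $\F_q^n$, we obtain
$$\dim C'\ \ge\ \dim C + \dim C_{\mathrm{loc}} - n \ =\ k - \frac{n}{r+1}.$$
The hypothesis $r\ge n/k$ gives $k(r+1)\ge n+k>n$, so $k-n/(r+1)\ge 1$ and $C'$ is nontrivial. Since $C'\subseteq C$, its minimum distance is at least $d$; since $C'\subseteq C_{\mathrm{loc}}$, it inherits locality $r$ from the block-parity structure (for each $i\in B_j$, the recovery set $I_i := B_j\setminus\{i\}$ has size $r$ and determines $c_i$ linearly). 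If $\dim C'$ strictly exceeds $k-n/(r+1)$, one passes to any $\F_q$-linear subcode of $C'$ of dimension exactly $k-n/(r+1)$; such a subcode still lies in $C_{\mathrm{loc}}$ and so retains locality $r$, while its minimum distance can only be $\ge d$.

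Since the whole argument reduces to dimension counting plus checking that locality is inherited by subcodes, there is no substantive obstacle. The only points worth verifying carefully are that the hypothesis $r\ge n/k$ is indeed what makes $k-n/(r+1)$ a positive integer (so that the claimed code exists in a nontrivial sense), and that the linear recovery rule coming from the block parity check literally fits the definition of locality given in Subsection~\ref{subsec:2.1}; both are immediate from the setup above.
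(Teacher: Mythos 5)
Your proposal is correct and matches the paper's construction: the paper obtains the locally repairable code by adjoining $n/(r+1)$ disjointly supported all-ones parity checks (one per block of size $r+1$) to a parity check matrix of $C$, which is exactly your intersection $C\cap C_{\mathrm{loc}}$, with the same dimension count $\dim C' \ge k - n/(r+1)$ and the same block-based repair rule. The only difference is presentational, and your extra care about positivity of the dimension and about trimming to the exact dimension is consistent with what the paper leaves implicit.
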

 The \LRC with a locality of $r$ in the above lemma is obtained from an $[n,k,d]$-linear code $C$ by adding $n/(r+1)$ rows to a parity check matrix of the code $C$ such that each new row has exactly $r+1$ nonzero entries that are equal to $1$ and the support of all the new rows are disjoint.

Combining the classical Gilbert-Varshamov  bound with Lemma \ref{lem:1.4} gives an existence lower bound on \LRCs
\begin{equation}\label{eq:NGV}R_q(r,\delta) \ge 1-H_q(\delta)-\frac{1}{r+1}=\frac{r}{r+1}-H_q(\delta).\end{equation}
If we combine the Tsfasman-Vl\v{a}du\c{t}-Zink bound \cite[Theorem 8.4.7]{St09} with Lemma \ref{lem:1.4}, then we have
\begin{equation}\label{asymp_tvz_bound}
R_q(r,\delta) \ge \frac{r}{r+1}-\delta-\frac{1}{\sqrt{q}-1}
\end{equation}
for any square prime power $q$.

One can verify that the bound \eqref{eq:NGV} is worse than the the asymptotic Gilbert-Varshamov bound \eqref{asymp_GV_bound} on locally repairable codes. 
However, the bound given in \eqref{asymp_tvz_bound} is better than the bound given in \eqref{construction_l} for $q\ge 43$.

Furthermore,   the bound given in \eqref{asymp_tvz_bound} is worse than the bound given in \eqref{construction_l+1} for $\delta > \frac{r(r-1)}{q-1}-\frac{1}{\sqrt{q}-1}$. It is also worse than our bound \eqref{eq:main} for $$\delta> \frac{r(r-1)-\sqrt{q}}{q-\sqrt{q}}.$$
 Thus, if the locality $r\le q^{1/4}$, then the bound given in \eqref{asymp_tvz_bound} is worse than both the bound given in \eqref{construction_l+1} and our bound \eqref{eq:main} for all $\delta$. On the other hand, if $r>\sqrt{q}$, then the bound given in \eqref{asymp_tvz_bound} is the best among three. This means that the bound from the naive construction is quite good for relatively large locality.

\section{Preliminaries}\label{sec:2}
In this section, we present some preliminaries on function fields over finite fields, algebraic geometry codes, and the asymptotically optimal Garcia-Stichtenoth function field tower given in \cite{GS96}.

\subsection{Function fields over finite fields}
Let $F/\F_q$ be a function field with the full constant field $\F_q$.
Let $\mathbb{P}_F$ denote by the set of places of $F$ and let  $g(F)$ denote by the genus of $F$.
The principal divisor of $z\in F^*$ is defined by $$(z)=\sum_{P\in \mathbb{P}_F} v_P(z)P$$ where $v_P$ is the normalized discrete valuation with respect to the place $P$.
Let $G$ be a divisor of $F$.
The Riemann-Roch space $$\mathcal{L}(G)=\{z\in F^*: (z)\ge -G\}\cup \{0\}$$
is a finite dimensional vector space over $\F_q$ and its dimension is $\ell(G)\ge \deg(G)-g(F)+1$ from Riemann's theorem  \cite[Theorem 1.4.17]{St09}.
Let $E$ be a subfield of $F$ with the same full constant field $\F_q$. Then
the Hurwitz genus formula yields
$$2g(F)-2=[F:E](2g(E)-2)+\deg \text{Diff}(F/E),$$
where Diff$(F/E)$ stands for the different of $F/E$ \cite[Theorem 3.4.13]{St09}. Since the different of $F/E$ is an effective divisor of $F$,
one has
$$2g(F)-2\ge [F:E](2g(E)-2).$$

Let $\Aut(F/\F_q)$ be the automorphism group of the function field $F$ over $\F_q$, that is,
$$\Aut(F/\F_q)=\{\sigma: F\mapsto F \ \mid \  \sigma \text{ is an } \F_q \text{-automorphism of } F\}.$$
Now we consider the group action of the automorphism group $\Aut(F/\F_q)$ on the set of places $\mathbb{P}_F$.
For any automorphism $\Gs\in \Aut(F/\F_q)$ and any place $P\in \mathbb{P}_F$, then $\Gs(P)=\{\Gs(z):z\in P\}$ is a place of $F$ as well.
The stabilizer of the place $P$ under the action of the automorphism group $\Aut(F/\F_q)$  is called the decomposition group of $P$ and
denoted by $$\mG(P)=\{\Gs\in \Aut(F/\F_q): \Gs(P)=P\}.$$
The orbit of $P$ under the action of $\Aut(F/\F_q)$ is denoted by $[P]$ and given by $$[P]=\{\Gs(P):\Gs\in \Aut(F/\F_q)\}.$$
Then the size of  $\Aut(F/\F_q)$ is $|\Aut(F/\F_q)|=|\mG(P)|\cdot |[P]|.$

Let $\mG$ be a subgroup of $\Aut(F/\F_q)$. The fixed subfield of $F$ with respect to $\mG$ is defined by
$$F^{\mG}=\{z\in F: \Gs(z)=z \text{ for all } \Gs\in \mG\}.$$
From the Galois theory, $F/F^{\mG}$ is a Galois extension with $\Gal(F/F^{\mG})=\mG.$
For any place $P\in \mathbb{P}_F$, the place $P\cap F^{\mG}$ is splitting completely in $F$ if and only if $\Gs(P)$ are pairwise distinct for all automorphisms $\Gs\in \mG$.

\subsection{Algebraic geometry codes}
In the subsection, we introduce the general construction of algebraic geometry codes.
The reader may refer to \cite{NX01,TV91,TVN90} for details.
Let $F/\F_q$ be a function field over the full constant field $\F_q$. Let $\mP=\{P_1,\dots,P_n\}$ be a set of $n$ distinct rational places of $F$. For a divisor $G$ of $F$ with $0<\deg(G)<n$ and $\supp(G)\cap\mP=\emptyset$, the algebraic geometry code is defined to be
\begin{equation}\label{eq:8}
C(\mP,G):=\{(f(P_1),\dots,f(P_n)): \; f\in\mL(G)\}.
\end{equation}
Then $C(\mP,G)$ is an $[n,k,d]$-linear code with dimension $k=\ell(G)$ and minimum distance $d\ge n-\deg(G)$.
If $V$ is a subspace of $\mL(G)$, we can define a subcode of $C(\mP,G)$ by
\begin{equation}\label{eq:9}
C(\mP,V):=\{(f(P_1),\dots,f(P_n)):\; f\in  V\}.
\end{equation}
Then the dimension of  $C(\mP,V)$ is the dimension of the vector space $V$ over $\F_q$ and the minimum distance of  $C(\mP,V)$ is still lower bounded by $n-\deg(G)$.

\subsection{Garcia-Stichtenoth function field tower}\label{asymtower}
Let  $q=\ell^2$ be a square of a prime power.
In this subsection, we consider the asymptotically optimal Garcia-Stichtenoth function field tower $\mathcal{T}=(T_1,T_2,T_3,\cdots)$ which is  given by $T_m=\F_{q}(y_1,y_2,\cdots,y_m)$ with
\begin{equation}\label{optimaltower}
y_{i+1}^{\ell}+y_{i+1}=\frac{y_i^{\ell}}{y_i^{\ell -1}+1}, \mbox{ for } i=1,2,\cdots, m-1
\end{equation}
in  \cite{GS96}.
We summarize the main properties of the tower $\mathcal{T}=(T_1,T_2,T_3,\cdots)$ in the following proposition.

\begin{prop}\label{prop:2.1} 
\begin{itemize}
\item[(i)] $[T_m:\F_q(y_i)]=\ell^{m-1}$, for $i=1,\cdots,m.$
\item[(ii)] Let $P\in \mathbb{P}_{T_m}$ be a pole of $y_1$ or a zero of $y_1-\Ga$ for $\Ga\in \F_q$ with $\Ga^{\ell-1}=-1$. Then the rational place $P$ is a common pole of $y_2,y_3,\cdots,y_m$. Moreover, $P$ is totally ramified in all extensions $T_m/T_1$.
\item[(iii)] Let $Q_\Ga$ denote the zero of $y_1-\Ga$ in $T_1$ for $\Ga\in \F_q$.
Then any rational place  $Q_\Ga$  with $\Ga^\ell+\Ga\neq 0$ splits completely in all extensions $T_m/T_1$. Hence, the number of rational places of $T_m$ is $N(T_m)\ge (q-\ell)\ell^{m-1}+\ell.$
\item[(iv)]  The genus of $T_m$ is given by
$$g(T_m)=\begin{cases}
(\ell^{\frac{m}{2}}-1)^2, & \mbox{ if } m \equiv 0(\mbox{mod } 2),\\
(\ell^{\frac{m+1}{2}}-1)(\ell^{\frac{m-1}{2}}-1), & \mbox{ if } m \equiv 1(\mbox{mod } 2).
\end{cases}$$
\item[(v)]  The tower  $\mathcal{T}=(T_1,T_2,T_3,\cdots)$ is asymptotically optimal, since it attains the Drinfeld-Vl\^{a}du\c{t} bound over $\F_q$, i.e., 
$$\limsup_{m\rightarrow \infty} \frac{N(T_m)}{g(T_m)}=\sqrt{q}-1.$$
\end{itemize}
\end{prop}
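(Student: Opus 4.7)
The plan is to treat the tower level by level, analyzing each extension $T_i \to T_{i+1}$ as a generalized Artin-Schreier extension $y_{i+1}^\ell + y_{i+1} = h_i$ with $h_i := y_i^\ell/(y_i^{\ell-1}+1) \in T_i$, and to track ramification and splitting via local analysis. I first identify the bad locus in $T_1$: the pole $P_\infty$ of $y_1$ and the $\ell-1$ zeros $Q_\alpha$ of $y_1-\alpha$ with $\alpha^{\ell-1}=-1$. Since $\ell-1$ is coprime to $p$, each such $\alpha$ is a simple root of $Y^{\ell-1}+1$, so the denominator of $h_1$ has a simple zero at $Q_\alpha$ while $y_1^\ell$ is a unit there; hence $h_1$ has a pole of order exactly $1$ at $Q_\alpha$. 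A similar degree count at $P_\infty$ gives a pole of order $1$. Because these pole orders are coprime to $p$, the polynomial $Y^\ell+Y-h_1$ is absolutely irreducible and yields total ramification at each of these $\ell$ places, with the extensions in $T_2$ being simple poles of $y_2$. Iterating this argument along the tower yields parts (i) and (ii).

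For (iii), at a rational place $Q_\alpha$ with $\alpha^\ell+\alpha \ne 0$, the element $h_1$ evaluates to $\beta := \alpha^\ell/(\alpha^{\ell-1}+1) \in \F_q$. Using $\alpha^{\ell^2}=\alpha$, one verifies $\beta^\ell=\beta$, so $\beta \in \F_\ell$. Since the $\F_\ell$-linear map $Y^\ell+Y:\F_{\ell^2}\to\F_\ell$ is surjective (its kernel has size $\ell$ by the computation above), $Y^\ell+Y-\beta$ splits into $\ell$ distinct linear factors over $\F_q$, and $Q_\alpha$ splits completely in $T_2/T_1$. Induction on $m$ gives complete splitting in every $T_m/T_1$, and counting yields $N(T_m)\ge (q-\ell)\ell^{m-1}+\ell$.

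The bulk of the work is (iv). At each step the only ramified places of $T_{i+1}/T_i$ are the $\ell$ totally ramified ones lying above the bad locus, and at each such place the pole order of $h_i$ is again exactly $1$, which one verifies by inductively controlling the local behavior at the unique extension to $T_{i+1}$. The different exponent at a totally ramified place of an extension $Y^\ell+Y=h$ with pole order $m$ coprime to $p$ equals $(m+1)(\ell-1)$; here each ramified place contributes $2(\ell-1)$. The Hurwitz formula then yields the recursion
\begin{equation*}
2g(T_{i+1})-2 \;=\; \ell\bigl(2g(T_i)-2\bigr) + 2\ell(\ell-1),
\end{equation*}
which, starting from $g(T_1)=0$ and separating the parities of $m$, solves to the closed form in (iv). The main obstacle is proving that \emph{no extra} ramification appears in any extension, i.e., that the bad locus does not grow as one climbs the tower; this requires careful valuation bookkeeping for the defining equation $y_{i+1}^\ell+y_{i+1}=h_i$ at every other candidate ramified place of $T_i$, and is the content of the most technical lemmas of \cite{GS96}.

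Finally, (v) is immediate from (iii) and (iv):
\begin{equation*}
\liminf_{m\to\infty}\frac{N(T_m)}{g(T_m)} \;\ge\; \lim_{m\to\infty}\frac{(q-\ell)\ell^{m-1}+\ell}{(\ell^{\lceil m/2\rceil}-1)(\ell^{\lfloor m/2\rfloor}-1)} \;=\; \ell-1 \;=\; \sqrt{q}-1,
\end{equation*}
matching (and therefore equal to) the Drinfeld-Vl\^{a}du\c{t} upper bound.
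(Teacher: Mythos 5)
The paper does not prove Proposition \ref{prop:2.1}; it is quoted from \cite{GS96}, so your sketch must be measured against the actual Garcia--Stichtenoth analysis. Your treatment of (ii), (iii) and (v) is essentially the standard argument and is fine (for (iii) you should also note that $\beta=\alpha^{\ell}/(\alpha^{\ell-1}+1)\neq 0$ whenever $\alpha^{\ell}+\alpha\neq 0$, so the inductive hypothesis is preserved; and for (i) with $i\geq 2$ you still owe the ``downward'' degree count $[T_i:\F_q(y_i)]=\ell^{i-1}$, which needs the defining equation read as a degree-$\ell$ equation for $y_{i-1}$ over $\F_q(y_i)$).

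The proof of (iv), however, has a genuine and fatal gap: the premise that at every step the only ramified places are the $\ell$ places above the bad locus of $T_1$, each contributing different exponent $2(\ell-1)$, is false, and your own recursion refutes the formula you are trying to prove. Solving $2g(T_{i+1})-2=\ell(2g(T_i)-2)+2\ell(\ell-1)$ with $g(T_1)=0$ gives $g(T_m)=(\ell^{m-1}-1)(\ell-1)$, which agrees with the stated formula for $m=2,3$ but not for $m=4$: $(\ell^3-1)(\ell-1)=\ell^4-\ell^3-\ell+1$ versus $(\ell^2-1)^2=\ell^4-2\ell^2+1$, a discrepancy of $\ell(\ell-1)^2$. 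Equivalently, Hurwitz forces $\deg\mathrm{Diff}(T_4/T_3)=2\ell^2(\ell-1)$, not $2\ell(\ell-1)$. The source of the missing ramification is that the bad locus \emph{does} grow: the zero of $y_1$ (where $h_1$ vanishes to order $\ell$) splits in $T_2/T_1$ into places at which $y_2\equiv\beta$ with $\beta^{\ell}+\beta=0$; at those with $\beta\neq 0$ one computes $v(y_2-\beta)=\ell$, hence $v(h_2)=-\ell$, a pole of order divisible by $p$ that must be handled by Artin--Schreier reduction, and the places above it eventually carry nontrivial different. This is precisely the technical core of \cite{GS96} (carried out there by analyzing the whole pyramid of subfields $\F_q(y_i,\dots,y_j)$ rather than step by step), and it cannot be waved away as bookkeeping that confirms ``no extra ramification'' --- the extra ramification is exactly what makes the genus $(\ell^{m/2}-1)^2$ rather than $(\ell^{m-1}-1)(\ell-1)$. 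Note that the asymptotic claim (v) survives, since one only needs the upper bound $g(T_m)\leq \ell^m$ implicit in either formula, but as a proof of the exact genus statement (iv) the argument is incorrect.
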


Let $\mP$ denote by the set of places of $T_m$ lying over $Q_\Ga$ for all $\Ga\in \F_q$ with $\Ga^\ell+\Ga\neq 0.$
Then the cardinality of $\mP$ is $\ell^{m-1}(q-\ell)$ from Proposition \ref{prop:2.1}(iii). For any place $P\in \mP$, there exists $\Ga_1\in \F_q$ with $\Ga_1^\ell+\Ga_1\neq 0$ such that $P$ is a zero place of $y_1-\Ga_1$ in $T_m$. By induction, we can show that there exist $\Ga_i\in \F_q$ with $\Ga_i^\ell+\Ga_i=\Ga_{i-1}^{\ell}/(\Ga_{i-1}^{\ell-1}+1)$ such that $P$ is a zero place of $y_i-\Ga_i$ for $2\le i\le m$. Hence, $P$ is a common zero of $y_1-\Ga_1,\cdots,y_m-\Ga_m$ in $T_m$ and we can identify $P$ with the $n$-tuple $(\Ga_1,\Ga_2,\cdots,\Ga_m)$.

From Proposition \ref{prop:2.1}(ii), the infinity place $\infty$ of $T_1$ is totally ramified in all extensions $T_m/T_1$ for $m\ge 2$. Denote by $P_{\infty, m}$ the unique place of $T_m$ lying over $\infty$. 
It can be shown that $v_{P_{\infty,m}}(y_m)=-1$ for any positive integer $m$ from the theory of Artin-Schreier extensions.
It seems to be difficult to determine the automorphism group of $T_m$ over $\F_q$, since it is not easy to determine the orbit of $P_{\infty,m}$ (see \cite{GS96,La13,PST98}). Fortunately,   the decomposition group of $P_{\infty, m}$
$$\mG(P_{\infty, m})=\{\Gs\in \Aut(T_m/\F_q):\Gs(P_{\infty, m})=P_{\infty, m}\}$$
was determined explicitly in  \cite{La13}. 
In particular, the decomposition group of $P_{\infty, m}$ consists of all automorphisms $\Gs$ with the following form
$$\begin{cases}
\Gs(y_i) = cy_i  \text{ for } i=1,\cdots, m-1,\\
\Gs(y_m) = cy_m+a,
\end{cases} $$
where $c\in \F_\ell^*$ and $a^\ell+a=0$ for odd $q$ or $m=1$.
For even $q$ and $m\ge 2$, the decomposition group of $P_{\infty, m}$ consists of all automorphisms $\Gs$ with
$$
\begin{cases}
\Gs(y_i)=cy_i \text{ for } 1\le i\le m-2,\\
\Gs(y_{m-1})=cy_{m-1}+b,\\
\Gs(y_m)=cy_m+\frac{b^2}{cy_{m-2}}+a,
\end{cases}
$$
where $b\in \F_\ell$, $a^\ell+a=b, \text{ and } c\in \F_\ell^*.$ 

Irrespective of the characteristic of $\F_q$, let $\mA$ denote by the set of all automorphisms $\Gs$ with 
$$\begin{cases}
\Gs(y_i) = cy_i  \text{ for } i=1,\cdots, m-1,\\
\Gs(y_m) = cy_m+a,
\end{cases} $$
where $c\in \F_{\ell}^*$ and $a^\ell+a=0$.
It is easy to verify that $\mA$ is a subgroup of the decomposition group of $P_{\infty, m}$. Let $P=(\Ga_1,\Ga_2,\cdots,\Ga_m)$ be a rational place of $\mP$. 
By considering the action of automorphism group on the set of places, we have
 $$y_m(\Gs^{-1}(P))=(\Gs(y_m))(P)=(cy_m+a)(P)=cy_m(P)+a=c\Ga_m+a.$$
Hence, it is easy to verify that
\begin{equation}\label{sigmaac}
\sigma^{-1}(P)=(c\Ga_1,\cdots,c\Ga_{m-1}, c\Ga_{m}+a).
\end{equation}
In the following section, we will focus on the subgroups of $\mA$.

\section{Construction of asymptotically good locally repairable codes}
The main propose of this section is to prove Theorem \ref{thm:1.1}.
We first introduce a general construction of locally repairable codes from automorphism groups of function fields and then employ this method to construct families of locally repairable codes with good asymptotic parameters from the Garcia-Stichtenoth function field tower $\mathcal{T}=(T_1,T_2,T_3,\cdots)$.

\subsection{Construction of locally repairable codes via automorphism groups}
In this subsection, we present a general construction of locally repairable codes from automorphism groups of function fields. 
This method was initiated systematically to construct optimal locally repairable codes from automorphism groups of rational function fields \cite{JMX17,TB14}.

Let $F/\F_q$ be a function field over the full constant field $\F_q$. Let $\Aut(F/\F_q)$ be the automorphism group of $F$ over $\F_q$.
Let $\mG$ be a subgroup of $\Aut(F/\F_q)$ of order $r+1$ and let $F^{\mG}$ be the fixed subfield of $F$ with respect to $\mG$.
Denote by $E$ the fixed subfield $F^{\mG}$. Then $F/E$ is a Galois extension with Galois group $\Gal(F/E)=\mG.$

Assume that there exist at least one rational place of $F$ which is ramified in $F/E$ and $m$ rational places $Q_1,\cdots, Q_m$ of $E$ which are all splitting completely in $F/E$.
Let $P_{i,1}, P_{i,2},\cdots, P_{i,r+1}$ be the $r+1$ rational places of $F$ lying over $Q_i$ for each $1\le i \le m$.
Put $\mP=\{P_{i,j}: 1\le i \le m, 1\le j\le r+1\}$. Then the cardinality of $\mP$ is $m(r+1)$.

Choose a divisor $G$ of $E$ such that $\text{supp}(G) \cap \{Q_1,\cdots,Q_m\}=\emptyset$.
The Riemann-Roch space $\mathcal{L}(G)=\{f\in E^*: (f)\ge -G\}\cup \{0\}$
 is a finite dimensional vector space over $\F_q$ with dimension $\ell(G)\ge \deg(G)-g(E)+1$, where $g(E)$ is the genus of $E$.
Let $\{z_1,\cdots, z_t\}$ be a basis of the Riemann-Roch space $\mathcal{L}(G)$ over $\F_q$.
Choose an element $x\in F$ such that $1,x,\cdots,x^{r-1}$ are linearly independent over $E$ and  $x(P_{ij})$ are pairwise distinct at the rational places $P_{i,1}, P_{i,2},\cdots, P_{i,r+1}$ for each $1\le i \le m$.
Consider the set of functions
$$V:=\Big{\{}\sum_{i=0}^{r-1} \Big{(}\sum_{j=1}^t a_{ij}z_j\Big{)}x^i\in F: a_{ij}\in \F_q \Big{\}}.$$

\begin{theorem}\label{thm:3.1} 
Let $\mP$ and $V$ be defined as above, then the algebraic geometry code
$$C(\mP,V)=\{(f(P))_{P\in \mP}: f\in V\}$$ is a $q$-ary $[n,k,d]$-locally repairable code with locality $r$, length $n=m(r+1)$, dimension $k=rt \ge r(\deg(G)-g(E)+1)$ and minimum distance $d\ge n-(r+1)\deg(G)-(r-1)\deg(x)_{\infty}$.
\end{theorem}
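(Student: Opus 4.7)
The plan is to verify each of the four claimed properties of $C(\mP,V)$ separately: the length is immediate, the locality is where the group action is exploited, and the dimension and minimum distance follow from standard algebraic geometry code estimates applied with some care to the tower $F/E$.

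For locality, the key observation is that because each $Q_i$ splits completely in $F/E$ and lies outside $\supp(G)$, every $h\in\mL(G)\subseteq E$ satisfies $h(P_{i,j})=h(Q_i)\in\F_q$ for all $1\le j\le r+1$. Hence for any $f=\sum_{k=0}^{r-1}h_k x^k\in V$ with $h_k=\sum_j a_{kj}z_j\in\mL(G)$, the restriction of $f$ to the fiber above $Q_i$ is
$$f(P_{i,j})=\sum_{k=0}^{r-1}h_k(Q_i)\,x(P_{i,j})^k,\qquad 1\le j\le r+1,$$
namely the evaluations of a single polynomial of degree at most $r-1$ over $\F_q$ at the $r+1$ pairwise distinct points $x(P_{i,1}),\ldots,x(P_{i,r+1})$. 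Any $r$ such values determine the polynomial via Lagrange interpolation and hence recover the remaining coordinate, giving locality $r$.

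For the dimension, the $E$-linear independence of $1,x,\ldots,x^{r-1}$ forces any $\F_q$-linear relation $\sum_{i,j}a_{ij}z_j x^i=0$ to split into the $r$ separate relations $\sum_j a_{ij}z_j=0$, each of which is trivial since $\{z_j\}$ is an $\F_q$-basis of $\mL(G)$; therefore $\dim_{\F_q}V=rt\ge r(\deg(G)-g(E)+1)$ by Riemann's theorem, and equality $k=\dim C(\mP,V)$ will follow from injectivity of the evaluation map, itself a consequence of the minimum-distance bound below being positive. For the distance, for any nonzero $f=\sum_{k=0}^{r-1}h_k x^k\in V$, the pole divisor of each $h_k\in E$ in $F$ is bounded by the conorm $\mathrm{Con}_{F/E}(G)$, which has degree $[F:E]\deg(G)=(r+1)\deg(G)$; combining with $(x^k)_\infty\le k\,(x)_\infty$ yields
$$(f)_\infty \;\le\; \mathrm{Con}_{F/E}(G)+(r-1)(x)_\infty.$$
Hence the number of zeros of $f$ in $\mP$ is at most $(r+1)\deg(G)+(r-1)\deg(x)_\infty$, and the Hamming weight of the corresponding codeword is at least $n-(r+1)\deg(G)-(r-1)\deg(x)_\infty$.

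The main obstacle is getting the pole accounting right: one must transport poles from $E$ to $F$ via the conorm in order to pick up the factor $[F:E]=r+1$, and the hypothesis $\supp(G)\cap\{Q_1,\ldots,Q_m\}=\emptyset$ is essential for the locality step so that $h_k(Q_i)$ is a well-defined element of $\F_q$. Once these two points are handled, the rest reduces to repeated use of elementary inequalities for pole divisors in function field extensions and the standard zero-counting bound for algebraic geometry codes.
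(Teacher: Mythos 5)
Your proof is correct and follows essentially the same route as the paper's: locality via Lagrange interpolation on the polynomial $\sum_k h_k(Q_i)\,x^k$ of degree $\le r-1$ over the fiber above each $Q_i$, the dimension of $V$ from $E$-linear independence of $1,x,\dots,x^{r-1}$ together with Riemann's theorem (and injectivity of evaluation under $d\ge 1$, which the paper also assumes implicitly), and the distance from the pole-divisor bound $(f)_\infty\le\mathrm{Con}_{F/E}(G)+(r-1)(x)_\infty$ of degree $(r+1)\deg(G)+(r-1)\deg(x)_\infty$. You spell out the conorm step and the well-definedness of $h_k(Q_i)$ a bit more explicitly than the paper, but the underlying argument is identical.
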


\begin{proof}
First, it is easy to see that the dimension of $V$ over $\F_q$ is $rt=r\ell(G)$,
since $1,x,\cdots, x^{r-1}$ are linearly independent over $E$ and $\{z_1,\cdots, z_t\}$ is a basis of the Riemann-Roch space $\mathcal{L}(G)$ over $\F_q$.
The elements $z_jx^i$ with $z_j\in \mL(G)$ have at most $(r+1)\deg(G)+(r-1)\deg(x)_{\infty}$ zeros for $0\le i\le r-1$ and $1\le j\le t$. Hence, the minimum distance of $C(\mP,V)$ is lower bounded by $d\ge n-(r+1)\deg(G)-(r-1)\deg(x)_{\infty}$. Under the assumption that $d\ge 1$, the dimension of  $C(\mP,V)$ is $k=rt \ge r(\deg(G)-g(E)+1)$ from Riemann's Theorem.
The locality property follows from the Lagrange interpolation formula, since $x$ takes pairwise distinct values on the set of $r$ rational places 
$\{\Gs(P): 1\neq \Gs\in \mG\}$ for any rational place $P\in \mP$.
\end{proof}

\begin{rmk}
The result of Theorem \ref{thm:3.1} can be found in  \cite[Theorem 3.1]{BTV17} and \cite[Proposition 3.1]{JMX17} for the rational function field $F$ over $\F_q$.
\end{rmk}

\subsection{Locally repairable codes from asymptotic optimal towers}
In this subsection, we will construct locally repairable codes via the subgroups of the automorphism group of the asymptotic optimal 
Garcia-Stichtenoth function field tower $\mathcal{T}=(T_1,T_2,T_3,\cdots)$.
Let $\mA$ denote by the set of all automorphisms $\Gs$ of $T_m$ with 
$$\begin{cases}
\Gs(y_i) = cy_i  \text{ for } i=1,\cdots, m-1,\\
\Gs(y_m) = cy_m+a,
\end{cases} $$
where $c\in \F_{\ell}^*$ and $a^\ell+a=0$.
As mentioned in Section \ref{asymtower}, $\mA$ is a subgroup of the decomposition group of $P_{\infty, m}$. 
In fact, we mainly focus on the subgroups of $\mA$, especially on the orders of the subgroups.
The structure of subgroups of $\mA$ can be determined from the proof of the following proposition. 

\begin{prop} \label{prop:3.2}
Let $\ell=p^w$ be a prime power.
Let $v$ be an integer with $0\le v\le w$ and let $u$ be a positive integer satisfying $u|\gcd(p^v-1, \ell-1)$.
Then there is a subgroup $\mG$ of  $\mA$ of order $up^v$.
\end{prop}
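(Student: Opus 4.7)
The plan is to exploit the semidirect product structure of $\mA$. First I would parametrize elements of $\mA$ by pairs $(c,a)$ with $c\in\F_\ell^*$ and $a^\ell+a=0$, and compute the composition law directly from the action on $y_m$: since $\sigma_1\sigma_2(y_m)=c_2(c_1y_m+a_1)+a_2=c_1c_2y_m+c_2a_1+a_2$, one obtains
\[
(c_1,a_1)\cdot(c_2,a_2)=(c_1c_2,\;c_2a_1+a_2).
\]
Thus $\mA=N\rtimes H$, where $N=\{(1,a):a^\ell+a=0\}$ is a normal additive subgroup of order $\ell=p^w$ isomorphic to $(\F_\ell,+)$, and $H=\{(c,0):c\in\F_\ell^*\}$ is a complement acting on $N$ by scalar multiplication in $\F_\ell$.

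I would then look for $\mG$ in the natural block shape $\mG=\{(c,a):c\in C,\ a\in W\}$ for a multiplicative subgroup $C\le\F_\ell^*$ of order $u$ and an additive subgroup $W\le N$ of order $p^v$. For $C$ I take the unique cyclic subgroup $\mu_u\subseteq\F_\ell^*$ of order $u$, whose existence uses $u\mid\ell-1$. The composition law then forces $W$ to be stable under multiplication by every element of $\mu_u$, so the remaining task is to exhibit a $\mu_u$-invariant $\F_p$-subspace of $N$ of $\F_p$-dimension exactly $v$.

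The key point is that, since $\gcd(u,p)=1$, the $\F_p$-subalgebra $\F_p[\mu_u]$ of $\F_\ell$ is a field $L=\F_{p^d}$, where $d$ is the multiplicative order of $p$ modulo $u$; consequently a $\mu_u$-invariant $\F_p$-subspace of $N$ is the same thing as an $L$-subspace. The arithmetic hypotheses enter precisely here: $u\mid p^v-1$ is equivalent to $d\mid v$, and $u\mid\ell-1=p^w-1$ gives $d\mid w$, so viewing $N$ as an $L$-vector space of dimension $w/d$ I can pick any $L$-subspace $W\subseteq N$ of $L$-dimension $v/d$, which has $\F_p$-dimension $v$ and cardinality $p^v$. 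A direct check from the composition formula then verifies that $\mG:=\{(c,a):c\in\mu_u,\,a\in W\}$ is closed under products and inverses, using $c_2\in L$ and the $L$-stability of $W$ to conclude $c_2a_1+a_2\in W$, and counting gives $|\mG|=u\cdot p^v$. I expect the main conceptual hurdle to be simply recognizing that the condition $u\mid\gcd(p^v-1,\ell-1)$ is exactly the arithmetic hypothesis needed to produce a $\mu_u$-invariant additive subgroup of $N$ of the prescribed order through the auxiliary field $L=\F_p(\mu_u)$; once that link is identified, the remaining verifications are routine bookkeeping.
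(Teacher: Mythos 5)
Your proposal is correct and follows essentially the same route as the paper: the paper likewise takes the order-$u$ subgroup $H\le\F_\ell^*$, sets $h=\min\{t>0:u\mid p^t-1\}$ (your $d$), identifies $\F_p(H)=\F_{p^h}$ with $h\mid\gcd(v,w)$, and chooses $W$ to be an $\F_{p^h}$-subspace of $\{a\in\F_q:a^\ell+a=0\}$ of dimension $v/h$. Your write-up merely makes explicit the semidirect-product composition law and the closure verification that the paper leaves as ``easy to verify.''
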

\begin{proof}
If $u$ is a divisor of $\ell-1$, then there exists a subgroup $H$ of the multiplicative group $\F_\ell^*$ of order $u$. As $u|(p^v-1)$,  the field $\F_p(H)$ is contained in $\F_{p^v}$. Put $h=\min \{ t>0: u|(p^t-1)\}$. Then we have
$\F_p(H)=\F_{p^h}\le \F_\ell$ and $h|\gcd(v,w)$.

Let $W$ be a vector subspace of $\{a\in \F_q: a^\ell+a=0\}$ over $\F_{p^h}$ with dimension $v/h$. Let $\mG$ be the set of all automorphisms $\Gs$ with the form
$$ \Gs(y_i)=cy_i \text{ for } 1\le i\le m-1 \text{ and } \Gs(y_m)=cy_m+a,$$
where $c\in H$ and $a\in W$.
Then it is easy to verify that $\mG$ is a subgroup of  $\mA$  of order $up^{v}$.
\end{proof}

Let $\mP$ be a subset of $\mathbb{P}_{T_m}$ with
$$\mP=\Big{\{}(\Ga_1,\Ga_2,\cdots,\Ga_m):\Ga_1^{\ell}+\Ga_1\neq 0, \Ga_i^\ell+\Ga_i=\frac{\Ga_{i-1}^{\ell}}{\Ga_{i-1}^{\ell-1}+1} \text{ for } 2\le i\le m\Big{\}}.$$
Now we can construct locally repairable codes with good asymptotic parameters from the subgroups of $\mA$ which is a subgroup of the decomposition group of $P_{\infty,m}$.

\begin{theorem}\label{thm:3.4}
Let $q=\ell^2$ with $\ell=p^w$ and let $r$ be a positive integer. For any integer $v$ with $0\le v\le w$ and any positive integer $u$ satisfying 
$u|\gcd(p^v-1,\ell-1)$, let $up^v=r+1$.
Then there exists a family of $q$-ary $[n_m,k_m,d_m]$-locally repairable codes with locality $r$ constructed from $T_m$ with the parameters
$$\begin{cases}
n_m= \ell^{m-1}(q- \ell),\\
k_m \ge r s-r(g(T_m)-1)/(r+1),\\
d_m \ge n_m- (r+1) s-(r-1)\ell^{m-1},
\end{cases}$$
where $(g(T_m)-1)/(r+1)  \le s\le n_{m-1}.$
\end{theorem}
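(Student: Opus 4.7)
The plan is to apply the general construction of Theorem~\ref{thm:3.1} with $F=T_m$ and $\mG$ taken to be the subgroup of $\mA$ of order $r+1=up^v$ supplied by Proposition~\ref{prop:3.2}, and then to read off the parameters from the structural data of $T_m$ summarized in Proposition~\ref{prop:2.1}. First I set $E=T_m^{\mG}$, so that $T_m/E$ is a Galois extension of degree $r+1$. Using the explicit description~(\ref{sigmaac}) of how $\mG$ acts on the places in $\mP$, I would check that $\mG$ acts freely on $\mP$: if $\sigma=(c,a)\in\mG$ fixes $P=(\alpha_1,\dots,\alpha_m)\in\mP$, then the first coordinate of $\sigma^{-1}(P)$ gives $c\alpha_1=\alpha_1$, and since $\alpha_1^{\ell}+\alpha_1\neq 0$ forces $\alpha_1\neq 0$, we obtain $c=1$; substituting back yields $a=0$. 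Consequently $\mP$ decomposes into $n_m/(r+1)$ orbits of size $r+1$, each projecting to a rational place of $E$ that splits completely in $T_m/E$.

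Next I would bound the genus of $E$. Since the different of $T_m/E$ is effective, the Hurwitz genus formula yields $(r+1)(2g(E)-2)\le 2g(T_m)-2$, hence $g(E)\le 1+(g(T_m)-1)/(r+1)$.

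The main technical step is to exhibit an $x\in T_m$ satisfying: (i) $1,x,\dots,x^{r-1}$ are linearly independent over $E$; (ii) $x$ takes $r+1$ pairwise distinct values on each $\mG$-orbit in $\mP$; and (iii) $\deg(x)_\infty\le \ell^{m-1}$. The natural candidate is built from $y_1$, which has $P_{\infty,m}$ as its unique pole with $v_{P_{\infty,m}}(y_1)=-\ell^{m-1}$, giving $\deg(y_1)_\infty=\ell^{m-1}$. When $v=0$ (purely multiplicative action of $\mG\le H$), $x=y_1$ fulfills (i)--(iii) directly because $y_1(\sigma^{-1}(P))=c\alpha_1$ is injective in $c\in H$ as $\alpha_1\neq 0$. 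When $v\ge 1$, the element $y_1$ fails (ii) because it is invariant under the additive part, and one must perturb; a representative candidate is $x=y_1+y_m$, whose pole at $P_{\infty,m}$ is unchanged in order (since $v_{P_{\infty,m}}(y_m)=-1$ does not dominate), whose $r+1$ Galois conjugates $c(y_1+y_m)+a$ over $E$ are distinct in $T_m$ (yielding (i)), and for which (ii) amounts to injectivity of $(c,a)\mapsto c(\alpha_1+\alpha_m)+a$ on $\mG$ at every place $P\in\mP$. Establishing (ii) uniformly across $\mP$, possibly by further correcting the linear combination with an intermediate $y_j$ or by discarding a vanishing fraction of orbits, is the main obstacle.

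Finally I would pick any effective divisor $G$ on $E$ of degree $s$ with support disjoint from the $n_m/(r+1)$ projected rational places; the imposed range $(g(T_m)-1)/(r+1)\le s\le n_{m-1}$ ensures both positive dimension and availability of enough split places outside $\supp(G)$. Applying Theorem~\ref{thm:3.1} to $(\mP,G,x)$ then produces an $\F_q$-linear locally repairable code of length $n_m=\ell^{m-1}(q-\ell)$ and locality $r$, with dimension $k_m=r\ell(G)\ge r(\deg(G)-g(E)+1)\ge rs-r(g(T_m)-1)/(r+1)$ by the Hurwitz bound above, and with minimum distance $d_m\ge n_m-(r+1)\deg(G)-(r-1)\deg(x)_\infty\ge n_m-(r+1)s-(r-1)\ell^{m-1}$, matching the claimed parameters.
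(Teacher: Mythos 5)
Your framework is the paper's: take the subgroup $\mG$ of order $r+1$ from Proposition \ref{prop:3.2}, check that the places of $\mP$ split completely in $T_m/T_m^{\mG}$ (your freeness argument via the first coordinate is fine), bound $g(T_m^{\mG})$ by Hurwitz, and feed everything into Theorem \ref{thm:3.1} with $G=sQ_{\infty,m}$. But the step you yourself flag as ``the main obstacle'' --- exhibiting an $x$ that separates the points of every $\mG$-orbit in $\mP$ --- is precisely the crux of the theorem, and your candidates do not close it. The paper's choice is simply $x=y_m$. The key observation you are missing is that every $(\alpha_1,\dots,\alpha_m)\in\mP$ satisfies $\alpha_i^\ell+\alpha_i\neq 0$ for \emph{all} $i$, not only $i=1$: inductively, $\alpha_i^\ell+\alpha_i=\alpha_{i-1}^{\ell+1}/(\alpha_{i-1}^{\ell}+\alpha_{i-1})\neq 0$. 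Hence if $c_1\alpha_m+a_1=c_2\alpha_m+a_2$ for distinct $(c_1,a_1),(c_2,a_2)\in\mG$, then $c_1\neq c_2$ and $\alpha_m=(a_2-a_1)/(c_1-c_2)$, which forces $\alpha_m^\ell+\alpha_m=\big((a_2^\ell+a_2)-(a_1^\ell+a_1)\big)/(c_1-c_2)=0$, a contradiction; so $y_m$ separates every orbit. Linear independence of $1,y_m,\dots,y_m^{r-1}$ over $T_m^{\mG}$ then follows from the strict triangle inequality at $P_{\infty,m}$, which is totally ramified in $T_m/T_m^{\mG}$ and satisfies $v_{P_{\infty,m}}(y_m)=-1$, so that nonzero terms $c_iy_m^i$ have pairwise distinct valuations modulo $r+1$; and $\deg(y_m)_\infty=[T_m:\F_q(y_m)]=\ell^{m-1}$ gives exactly the claimed distance bound.

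Your candidate $x=y_1+y_m$ for $v\ge 1$ fails on two counts. First, when $u\ge 2$ the separation condition on the orbit of $P=(\alpha_1,\dots,\alpha_m)$ requires $\alpha_1+\alpha_m\notin\{(a_2-a_1)/(c_1-c_2)\}$, a set contained in the kernel of $z\mapsto z^\ell+z$; but $(\alpha_1+\alpha_m)^\ell+(\alpha_1+\alpha_m)=(\alpha_1^\ell+\alpha_1)+(\alpha_m^\ell+\alpha_m)$ can vanish even though both summands are nonzero, so some orbits genuinely break, and discarding them would contradict the exact value $n_m=\ell^{m-1}(q-\ell)$ asserted in the theorem. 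Second, for $m\ge 2$ the function $y_m$ has poles besides $P_{\infty,m}$ (the totally ramified places over the zeros of $y_1-\alpha$ with $\alpha^{\ell-1}=-1$, by Proposition \ref{prop:2.1}(ii)), at which $y_1$ is finite; hence $\deg(y_1+y_m)_\infty=2\ell^{m-1}-1>\ell^{m-1}$, which would weaken the claimed bound on $d_m$. Replacing your $x$ by $y_m$ repairs both defects and completes the proof along the lines you set up.
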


\begin{proof}
Let $\mG$ be any subgroup of $\mA$ of order $up^v=r+1$.
Then $P_{\infty,m}$ is totally ramified in $T_m/T_m^{\mG}$.
Let $P=(\Ga_1,\Ga_2,\cdots,\Ga_m)$ be a rational place of $\mP$. For any automorphism $\Gs\in \mG$ with $\Gs(y_m)=cy_m+a$, we have
$$\sigma^{-1}(P)=(c\Ga_1,\cdots,c\Ga_{m-1}, c\Ga_{m}+a)$$
from Equation \eqref{sigmaac}.
Then the rational places $\Gs^{-1}(P)$ are pairwise distinct for all automorphisms $\Gs\in \mG$. 
Hence, $P\cap T_m^{\mG}$ is splitting completely in the extension $T_m/T_m^{\mG}$ for any place $P\in \mP$.

Let $x=y_m$.
We claim that $x$ takes pairwise distinct values on the set of $r+1$ rational places  $\{\Gs^{-1}(P):\Gs\in \mG\}$ for each $P\in \mP$.
Let $\Gs_1$ and $\Gs_2$ be two different automorphisms of $\mG$ with $\Gs_i(y_m)=c_iy_m+a_i$, where $c_i\in \F_\ell^*$ and $a_i^\ell+a_i=0$ 
 for $i=1,2$.
It follows that $y_m(\Gs_i^{-1}(P))=c_i\Ga_m+a_i$ for $i=1,2.$
Suppose that $c_1\Ga_m+a_1=c_2\Ga_m+a_2$.
If $c_1=c_2$, then $a_1=a_2$, i.e., $\Gs_1=\Gs_2$. Hence, $c_1\neq c_2$ and $\Ga_m=(a_2-a_1)/(c_1-c_2)$. It is easy to calculate that $$\Ga_m^\ell+\Ga_m=\Big{(}\frac{a_2-a_1}{c_1-c_2}\Big{)}^\ell+\frac{a_2-a_1}{c_1-c_2}=\frac{(a_2^\ell+a_2)-(a_1^\ell+a_1)}{c_1-c_2}=0,$$
which is a contradiction to $(\Ga_1,\cdots,\Ga_m)\in \mP$. Thus, $y_m(\sigma_1^{-1}(P))\neq y_m(\sigma_2^{-1}(P))$.

Let $Q_{\infty,m}=P_{\infty,m} \cap T_m^{\mG}$. Choose $G=s Q_{\infty,m}$ and the dimension of the Riemann-Roch space $\mathcal{L}(G)$ over $\F_q$ is $\ell(G)\ge s-g(T_m^{\mG})+1.$
The Hurwitz genus formula yields the inequality
$2g(T_m)-2\geq (r+1)[2g(T_m^{\mG})-2].$
Then we have $$\ell(G) \ge s- \frac{g(T_m)-1}{r+1}.$$
Let $\{z_1,\cdots, z_t\}$ be a basis of the Riemann-Roch space $\mathcal{L}(G)$ over $\F_q$.
Consider the set of functions
$$V=\Big{\{}\sum_{i=0}^{r-1} \Big{(}\sum_{j=1}^t a_{ij}z_j\Big{)}x^i\in T_m: a_{ij}\in \F_q \Big{\}}.$$
We claim that $1,x,\cdots,x^{r-1}$ are linearly independent over $T_m^{\mG}$.
Suppose that the claim is false, i.e., there exist $c_i\in T_m^{\mG}$ for $0\le i \le r-1$ such that
$$\sum_{i=0}^{r-1} c_ix^i=0,  \text{ not all } c_i=0.$$
For $c_i\neq 0$, we obtain
$$v_{P_{\infty,m}}(c_ix^i)=v_{P_{\infty,m}}(c_i)+i\cdot v_{P_{\infty,m}}(x)= (r+1)v_{Q_{\infty,m}}(c_i)-i\equiv -i \ (\text{mod } r+1),$$
from the facts $P_{\infty,m}$ is totally ramified in $T_m/T_m^{\mG}$ and $v_{P_{\infty,m}}(x)=v_{P_{\infty,m}}(y_m)=-1.$
Therefore $v_{P_{\infty,m}}(c_ix^i)\neq v_{P_{\infty,m}}(c_jx^j)$ whenever $i\neq j$, $c_i\neq 0$ and $c_j\neq 0$. The Strict Triangle Inequality \cite[Lemma 1.1.11]{St09} yields
$$v_{P_{\infty,m}}\Big{(}\sum_{i=0}^{r-1} c_ix^i\Big{)}=\min \Big{\{}  v_{P_{\infty,m}}(c_ix^i):c_i\neq 0 \Big{\}}<\infty,$$
which is a contradiction. Thus $1,x,\cdots,x^{r-1}$ are linearly independent over $T_m^{\mG}$.

From Theorem \ref{thm:3.1}, the dimension of algebraic geometry code $C(\mP,V)$ is
$$k_m=rt=r\ell(G)\ge rs-\frac{r}{r+1}[g(T_m)-1]$$
and the minimum distance $d_m$ of $C(\mP,V)$ is lower bounded by 
$$d_m \ge n_m- (r+1)s-(r-1)\deg (x)_{\infty} =n_m- (r+1)s-(r-1)\ell^{m-1}.$$
The last equality holds true, since the degree of the pole divisor of $x$ in $T_m$ is $$\deg(x)_{\infty}=\deg(y_m)_{\infty}=[T_m:\F_q(y_m)]=\ell^{m-1}$$
from Proposition \ref{prop:2.1}(i).
\end{proof}

\subsection{Proof of Theorem \ref{thm:1.1}}
With all preparations in this section, we are now able to prove Theorem \ref{thm:1.1}.
\begin{proof} Let $\{C_m=[n_m,k_m,d_m]_q\}$ be the family of \LRCs with locality $r$ constructed in Proposition \ref{prop:3.2}. 
An easy computation shows that
\begin{eqnarray*}
d_m+\frac{r+1}{r} k_m  &\ge &  n_m- (r+1)s-(r-1)\ell^{m-1}+(r+1)s-(g(T_{m})-1) \\
& \ge & n_m-(r-1)\ell^{m-1}-g(T_{m})+1 \\
& \ge & n_m-(r-1)\ell^{m-1}-\ell^m
\end{eqnarray*}
from Theorem \ref{thm:3.4} and Proposition \ref{prop:2.1}(iv).
Putting $\delta=\lim_{m\rightarrow \infty} \frac{d_m}{n_m}$ and $R=\lim_{m\rightarrow \infty} \frac{k_m}{n_m}$, we obtain
$$\delta+ \frac{r}{1+r}R\ge 1 -\frac{r-1}{q-\ell}-\frac{\ell}{q-\ell}.$$
The desired result  follows immediately.
\end{proof}

\section*{Appendix}
We provide a proof for Proposition \ref{prop:1.3} in this appendix.
\begin{proof}
Put $\delta=1/2$ and $$h(s)=\frac{1}{r+1} \log_q((1+(q-1)s)^{r+1}+(q-1)(1-s)^{r+1})-\delta \log_q(s).$$
Then the derivative
$$h^{\prime}(s)=\frac{(1+(q-1)s)^r((q-1)s-1)-(q-1)(1-s)^r(1+s)}{2s((1+(q-1)s)^{r+1}+(q-1)(1-s)^{r+1})\cdot \ln(q)} $$
is increasing in the interval $(0,1]$ and has a unique critical point $s_0\in (0,1]$ such that $h^{\prime}(s_0)=0.$
It follows that $h(s)$ is decreasing in the interval $(0,s_0]$ and increasing in the interval $[s_0,1]$. Hence, $h(s)$ achieves the minimum value at the point $s=s_0$. It is easy to verify that $s_0\in (\frac{1}{q-1},\frac{1}{q-1}+\epsilon)$ with $\epsilon=2^{-r}$ from the derivation $h^{\prime}(s)$.
From the mean value theorem, there exists $s_1\in (\frac{1}{q-1},s_0)$ such that
$$h(s_0)= h\Big{(}\frac{1}{q-1}\Big{)}+h^{\prime}(s_1)\Big{(}s_0-\frac{1}{q-1}\Big{)}\ge h\Big{(}\frac{1}{q-1}\Big{)}-\frac{q \epsilon}{\ln q}.$$
Hence, for $r\in[c\log_2 q, \frac{q}{2\log_q}]$ with $c>1$,
\begin{eqnarray*}
\min_{0<s\le 1}h(s) &=& h(s_0) \\
&\ge& \frac{1}{r+1}\log_q\Big{(}2^{r+1}+(q-1)\Big{(}\frac{q-2}{q-1}\Big{)}^{r+1}\Big{)}-\delta\log_q\Big{(}\frac{1}{q-1}\Big{)}-\frac{q\cdot 2^{-r}}{\ln q}\\
&\ge & \log_q2+\delta+\delta\log_q\Big{(}1-\frac{1}{q}\Big{)}-\frac{1}{q^{c-1}\ln q}\\
&\ge & \frac{1}{r+1}+ \frac{r}{r+1}\delta+ \frac{r}{r+1}\frac{\sqrt{q}+r-1}{q-\sqrt{q}},
\end{eqnarray*}
provided that $q$ is sufficiently large.
\end{proof}

\end{document}